\newcommand{\be}{\begin{equation}}
\newcommand{\ee}{\end{equation}}
\newcommand{\ba}{\begin{eqnarray}}
\newcommand{\ea}{\end{eqnarray}}
\newcommand{\ketbra}[2]{|#1\rangle \langle #2|}
\newcommand{\tr}{\operatorname{Tr}}
\newcommand{\etal}{{\it{et. al. }}}
\newtheorem{cor}{Corollary}
\newtheorem{proposition}{Proposition}
\newcommand{\I}{1\!\!1}
\begin{document}
 \newwrite\bibnotes
    \def\bibnotesext{Notes.bib}
    \immediate\openout\bibnotes=\jobname\bibnotesext
    \immediate\write\bibnotes{@CONTROL{REVTEX41Control}}
    \immediate\write\bibnotes{@CONTROL{%
    apsrev41Control,author="08",editor="1",pages="1",title="0",year="1"}}
     \if@filesw
     \immediate\write\@auxout{\string\citation{apsrev41Control}}%
    \fi
\title{Certifying quantumness beyond steering and nonlocality and its implications on  quantum information processing}  
\author{Chellasamy Jebarathinam}
  \email{jebarathinam@gmail.com}
 \affiliation{Department of Physics and Center for Quantum Frontiers of Research \& Technology (QFort), National Cheng Kung University, Tainan 701, Taiwan}
 \affiliation{Center for Theoretical Physics, Polish Academy of Sciences, Aleja Lotnik\'ow 32/46, 02-668 Warsaw, Poland}

 \affiliation{Department of Physics and Center for Quantum Information Science, National Cheng Kung University, Tainan 701, Taiwan}
 
 \author{Debarshi Das}
  \email{dasdebarshi90@gmail.com}
\affiliation{Department of Physics and Astronomy, University College London, Gower Street, WC1E 6BT London, England, United Kingdom}

\begin{abstract}
Superunsteerability is a particular kind of spatial quantum correlation that can be observed in a steering scenario in the presence of limited shared randomness. 
In this work, we define an experimentally measurable quantity in a steering scenario to certify superunsteerability. In the context of certification of randomness with this scenario, we demonstrate that such certification of superunsteerability  provides a bound on the amount of genuine randomness generation.  On the other hand, superlocality is another kind of spatial quantum correlation that can be observed in a Bell scenario in the presence of limited shared randomness. We identify inequalities to certify superlocality in the Bell scenarios that can be adopted to implement $2$-to-$1$ and $3$-to-$1$ random-access codes. We observe that such certification of superlocality acts as resource for the random-access codes in the presence of limited shared randomness. As a by-product of our certification of superunsteerability and superlocality, we identify a new classification of separable states having quantumness. 
\end{abstract}

\maketitle 
   \date{\today}
   
   \section{Introduction}

The fundamental aspects of quantum theory such as  quantum coherence and quantum nonlocality are valuable 
resources underpinning quantum technologies \cite{SAG+17,HHH+09,BCP+14,UCN+19}.  Therefore, characterizing these resources \cite{CG19} is  important 
for both the foundations of quantum mechanics and quantum information science. Recently, a relationship
between quantum coherence and quantum discord has been established \cite{MYG+16}.  Quantum discord  captures 
quantumness even in separable states \cite{OZ01,HV01} and  characterizes the quantum resources in certain quantum information processing tasks \cite{MBC+12,ATM16, BTD+17}.  Recently, it has been studied that even certain  separable states having quantum discord can be used to generate superlocal or superunsteerable correlations which
represent stronger than classical correlations in the presence of limited shared randomness (see Fig. (\ref{Fig:super}) and Fig. (\ref{Fig:superunsteer}))
\cite{DW15,CJ,JAS17,Jeb17,JDS+18,DBD+17,DAS201855}. Furthermore,  superunsteerability has been shown to be  a resource for quantum advantage in quantum information processing \cite{JDK+19,KJP+19}. 
Thus, quantum discord can be used as a resource for quantum tasks when the shared randomness is not a free resource.

\begin{figure}
	\begin{subfigure}{7cm}
		\centering\includegraphics[width=5cm]{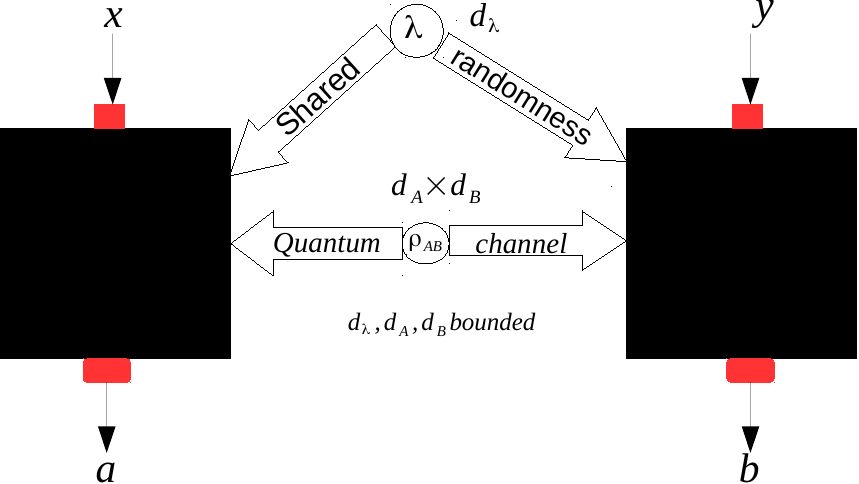}
		\caption{Superlocality   scenario where the dimension of resource (shared randomness
			$\lambda$ with the size $d_\lambda$
			or shared bipartite state $\rho_{AB}$ in $\mathbb{C}^{d_A}\otimes\mathbb{C}^{d_B}$) producing the box $\{p(a,b|x,y)\}$ is bounded. \label{Fig:super}}
	\end{subfigure}
	\begin{subfigure}{7cm}
		\centering\includegraphics[width=5cm]{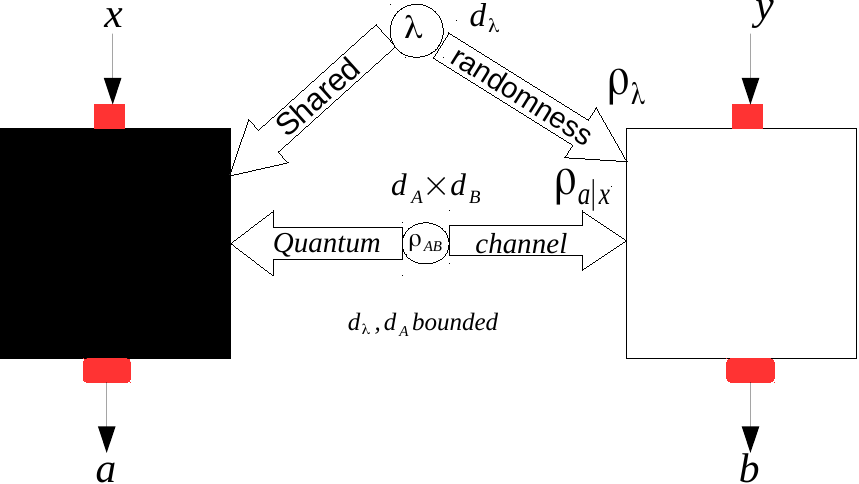}
		\caption{Superunsteerability  scenario where the dimension  $d_\lambda$ of the shared randomness
			$\lambda$ or local Hilbert-space dimension $d_A$ of  the shared bipartite state $\rho_{AB}$ in $\mathbb{C}^{d_A}\otimes\mathbb{C}^{d_B}$ producing the box $\{p(a,b|x,y)\}$ is bounded. \label{Fig:superunsteer}}
	\end{subfigure}

\end{figure}

Quantum coherence   manifests itself 
as a resource for genuine random number generation \cite{MYC+16,HGJ17}. 
In quantum random number generation protocols, genuine randomness is certified through 
the violation of a Bell inequality \cite{PAM+10} or uncertainty principle guarantees randomness \cite{VMT+14}.
In the approach based on Bell inequality, the protocol provides security in a device-independent way,
i.e., without the need to have knowledge about the internal working of the devices
while the approach based on the uncertainty principle requires trusted quantum devices.
Intermediate between these two approaches, an approach based on prepare and measure scenario 
was proposed to generate genuine randomness without the need to characterize the devices but by 
assuming only the Hilbert-space dimensions of the devices \cite{LBL+15}. This approach, where only the Hilbert-space dimensions of the devices are assumed, is known as semi-device-independent (SDI) approach.
SDI  quantum information protocols using shared quantum state as a quantum channel have also been studied  \cite{LTB+14,PCS+15,GBS16,Woo16}.

In the context of the prepare and measure protocols, analogous to the Bell inequalities,
dimension witnesses  have been derived for certifying the Hilbert-space dimension
\cite{GBH+10}. The dimension witnesses have been used as the  quantumness certification
for the semi-device-independent quantum information protocols \cite{PB11,LPY+12}. 
The relationship between the dimension witnesses and
the figure merit  of a quantum communication game called  quantum random-access codes \cite{Nay99}
enabled to demonstrate quantum random-access codes for secure quantum key distribution in a semi-device-independent way \cite{PB11}.

In Ref. \cite{PPK+09},  a task closely related to random-access codes which is implemented by Alice and Bob sharing 
a nonsignaling box  was studied to distinguish quantum and postquantum correlations.  
Subsequently, two of the authors of this work have demonstrated
implementation of  quantum random-access codes assisted by shared bipartite states replacing the quantum communication \cite{PZ10}. 
In this context,  correlations violating the suitable 
Bell inequalities assisted by one-bit of communication enable quantum advantage for the random-access codes.
The equivalence of implementation of quantum random-access codes using the prepare and measure scenario
and the scenario where a bipartite quantum state acts as quantum channel has led to study  the relationships between the dimension witnesses and
the Bell inequalities \cite{LMP+13}, 
the Popescu-Rohrlich boxes \cite{PR94} and random-access codes \cite{GHH+14,PS15}
and the sequential and spatial correlations used in the two types of scenarios for
implementing the quantum random-access codes \cite{TMP+16}.

Given the background that quantumness of the separable states is also useful for quantum 
information processing \cite{JDK+19,KJP+19}, it is important   to certify quantumness of the separable states
for self-testing of quantum information processing.
This  provides the motivation to construct semi-device-independent certification of 
quantumness beyond entanglement through witnessing superunsteerability of unsteerable 
correlations or 
superlocaltiy of local correlations.
To this end,  in this work, we  study  equivalence  of  quantumness of sequential correlations  in  a  prepare  and
measure scenario with independent devices and the quantumness of spatial correlations in the corresponding scenario with spatially separated correlated particles which realize the same task of the prepare and measure scenario.

Nonlinear dimension witnesses  constructed in Ref. \cite{BQB14}  serve
as the certification of quantumness of sequential correlations arising in the prepare and measure scenario 
in a semi-device-independent way.
In Ref. \cite{LBL+15}, such quantumness certification has been shown to achieve self-testing of
quantum random number generation. In the present study, we convert this prepare and measure protocol into the protocol
assisted by spatially separated bipartite quantum state. Furthermore, we 
extend the quantumness certification used in the prepare and measure scenario 
for the spatial correlations arising in this scenario involving spatially separated bipartite quantum state.
The quantumness certification defined for the spatial correlations can be used to provide
self-testing quantum random number generation.
We demonstrate that this quantumness certification witnesses 
superunsteerable correlations.
Thus,  the quantumness of the prepare and measure scenario gets manifested as superunsteerable correlations when the sequential correlations 
are observed as spatial correlations. 
Furthermore, we demonstrate a new classification of nonzero discord states based on the above certification of superunsteerability. Finally, we  address the connection of our certification of superunsteerability to certify superlocality.

We also consider the prepare and measure scenario  where the quantumness 
of the sequential correlations is observed in the presence of shared randomness
through a suitable linear dimension witness inequality \cite{GBH+10}. Such form of quantumness
has been exploited for achieving quantum random-access codes \cite{ALM+08}. 
The quantumness of sequential correlations in the presence of shared randomness gets manifested 
in the form of Bell nonlocal correlation in the corresponding scenario assisted by shared bipartite 
quantum state achieving the quantum random-access codes \cite{TMP+16}. We demonstrate that when certain sequential 
correlations which have quantumness and do not violate the dimension witness inequality are observed
as spatial correlations, they may become useful for implementing the quantum random-access codes
if there is a constraint on the amount of shared randomness. In this context, we point out that quantumness of the sequential correlations  gets manifested 
as superlocal correlations.

\section{Certifying superunsteerability}
\subsection{Prepare and measure scenario with independent devices} \label{P&MID}
\begin{figure}[h!]
	\begin{center}
		\includegraphics[width=7.5cm]{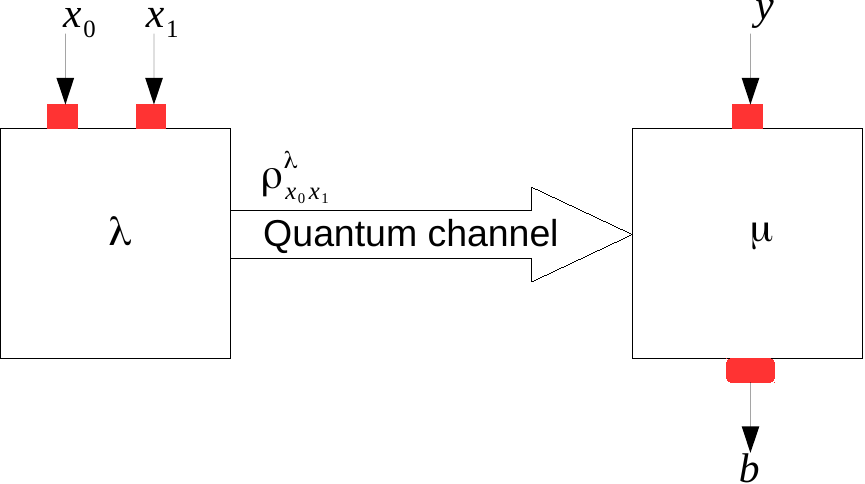}
	\end{center}
	\caption{Prepare and measure  scenario with uncorrelated devices. Here, $x_0,x_1,y,b \in \{0,1\}$ and $\rho^{\lambda}_{x_0x_1} \in \mathcal{B}(\mathbb{C}^{2})$, where $\mathcal{B}(\mathbb{C}^{2})$ is the set of bounded operators acting on the two-dimensional 
		Hilbert-space of quantum states $\mathbb{C}^{2}$.
		\label{Fig:PMNoSR}}
\end{figure}

In Ref. \cite{LBL+15}, a prepare and measure (P$\&$M) protocol was considered as shown in Fig. \ref{Fig:PMNoSR}  
for semi-device-independent (SDI) quantum random number generation (QRNG).
In this protocol, Alice has access to the set of four uncharacterized preparations labelled by  
$x_0x_1\in\{00,01,10,11\}$
which are in qubit states and sends one of them randomly to Bob through a quantum channel.  
Bob has access to the set of two uncharacterized measurements labelled by $y\in\{0,1\}$ and performs one of the measurements 
on the system received from Alice.  
To account for the imperfections, the devices are associated with 
an internal state denoted by $\lambda$ for Alice and $\mu$ for Bob. The devices are assumed to be independent, i.e.,  $p(\lambda, \mu)={q}_{\lambda }{r}_{\mu }$, $q_\lambda,r_\mu \ge 0 $, $\sum_{\lambda}{q}_{\lambda }=\sum_\mu{r}_{\mu }=1$. Bob observes the set of conditional probability of obtaining the outcomes $\{p(b|x_0x_1,y)\}$ which captures the correlations between Alice's preparations and Bob's measurements. The correlations are given by      
\begin{eqnarray}
	p(b|x_0x_1,y) & = & \sum _{\lambda ,\mu }{q}_{\lambda }{r}_{\mu }p(b|x_0x_1,y,\lambda ,\mu ) \nonumber \\ 
	& = & {\rm{Tr}}({\rho }_{x_0x_1}\frac{\I+(-1)^b{M}_{y}}{2}) \nonumber \\  
	& = & \frac{1}{2}(1+(-1)^b{\overrightarrow{S}_{x_0x_1}}\cdot {\overrightarrow{T}}_{y}), \label{BprobPM}
\end{eqnarray}
where  $\I$ is the $2 \times 2$ identity matrix and
\ba
{\rho }_{x_0x_1}&=&  \sum _{\lambda }{q}_{\lambda }{\rho }_{x_0 x_1}^{\lambda } =\frac{1}{2}(\I+{\overrightarrow{S}}_{x_0x_1}\cdot \overrightarrow{\sigma }),   \label{CSPM} \\
{M}_{y}&=&\sum _{\mu }{r}_{\mu }{M}_{y}^{\mu }={\overrightarrow{T}}_{y}\cdot \overrightarrow{\sigma }\mathrm{,} \label{BMop}
\ea
are the observed states  and measurements denoted in terms of the Bloch vectors $\overrightarrow{S}_{x_0x_1}$ and $\overrightarrow{T}_{y}$,
where $\overrightarrow{\sigma }$ is the vector of Pauli matrices.

To certify the quantumness of preparations and measurements in the above scenario, Lunghi \etal \cite{LBL+15} considered the following nonlinear dimension witness introduced in Ref. \cite{BQB14}:
\begin{align}
	W=\left|\begin{array}{cc}p(\mathrm{0|00,0})-p(\mathrm{0|01,0}) & p(\mathrm{0|10,0})-p(\mathrm{0|11,0})\\ 
		p(\mathrm{0|00,1})-p(\mathrm{0|01,1}) & p(\mathrm{0|10,1})-p(\mathrm{0|11,1})\end{array}\right|. \label{Wit}
\end{align}
The witness takes the value $0$ for any strategy involving a classical bit while it takes the value $0 \le W \le 1$ for any generic qubit strategy.  When the qubit preparations are classical (i.e., there exists a basis in which all states $\rho_{x_0 x_1}^{\lambda}$ are diagonal) or the measurements are classical (i.e., the measurements performed by Bob are commuting), the above witness $W$ takes the value $0$. Thus, $ 0 < W \le 1$ serves as certification of quantumness of the correlations $\{p(b|x_0x_1,y)\}$ in a SDI way since 
no assumption is required on the devices except that Alice and Bob prepare and measure on systems 
of the given dimension.
The witness takes the maximal value of $W=1$ for the following preparations:
\begin{align}
	\rho_{00}&=\frac{1}{2}\left(\I+\sigma_z \right)   \nonumber  \\
	\rho_{01}&=\frac{1}{2}\left(\I-\sigma_z \right)   \nonumber  \\
	\rho_{10}&=\frac{1}{2}\left(\I+\sigma_x\right)   \nonumber  \\
	\rho_{11}&=\frac{1}{2}\left(\I-\sigma_x\right)  
\end{align}
and measurements:
\begin{align}
	M_0&=\sigma_z, \quad     M_1=\sigma_x,
\end{align}
which correspond to the BB84 protocol \cite{BB84}.

\subsection{Relating the quantumness of sequential correlations with that of spatial correlations}

\begin{figure}[h!]
	\begin{center}
		\includegraphics[width=7.5cm]{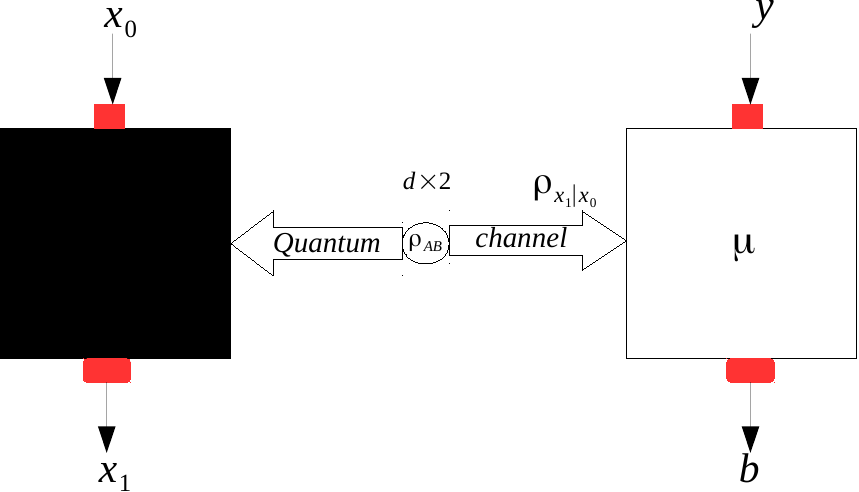}
	\end{center}
	\caption{A semi-device-independent scenario  where a shared bipartite state acts as the quantum channel to replicate the P$\&$M scenario in Fig. \ref{Fig:PMNoSR}.
		\label{Fig:SDI}}
\end{figure}
To observe the quantumness of  sequential correlations in the above scenario as spatial correlations,
let us consider a SDI scenario as shown in Fig.
\ref{Fig:SDI} where Alice and Bob share  a bipartite quantum state $\rho_{AB}$ of dimension $d \times 2$,  where $d$ is arbitrary. 
The spatial correlations that can be observed in this scenario can be seen to produce the sequential correlation in the P$\&$M scenario
as follows.
In the  SDI scenario as shown in Fig. \ref{Fig:SDI},  Alice has access to the set of two black-box measurements labelled by $x_0\in\{0,1\}$. The outcome of each measurement is denoted by $x_1\in\{0,1\}$.
These two measurements of Alice  prepare a set of four qubit states on Bob's side $\{\rho_{x_1|x_0}\}$.
The set of unnormalized conditional states prepared on Bob's side  $\{\sigma_{x_1|x_0}\}$ is given by 
\be
\sigma_{x_1|x_0}=\tr_A \left( M_{x_1|x_0} \otimes \I \rho_{AB}\right)=p(x_1|x_0)\rho_{x_1|x_0},
\ee
where  $M_{x_1|x_0}$ denotes the measurements operator of Alice's measurement and $p(x_1|x_0)$ is the probability of obtaining the outcome $x_1$ on Alice's side.
On the conditional states prepared by Alice's measurement, Bob performs one of two 
uncharacterized measurements labelled by $y\in \{0,1\}$ which is chosen randomly. 

Such a SDI 
scenario has been considered in Ref. \cite{MGH+16} for detecting dimension-bounded quantum steering where
only the Hilbert-space dimension of the trusted party is characterized and no other assumption, such as which measurements
the trusted party performs, is made.
This SDI scenario demonstrates quantum steering if $\{\sigma_{x_1|x_0}\}$
does not have a local-hidden-state (LHS) model \cite{WJD07, Pus13}.
The set of  normalized qubit states prepared on Bob's side in the 
Bloch representation is given by 
\be
\rho_{x_1|x_0}=\frac{1}{2}(\I+{\overrightarrow{S}}_{x_0x_1}\cdot \overrightarrow{\sigma }). \label{CSSDI}
\ee
Comparing Eqs. (\ref{CSPM}) and (\ref{CSSDI}), it follows that the sequential correlations 
produced in the P$\&$M scenario can be reproduced as spatial correlations in the SDI scenario as in Fig. \ref{Fig:SDI} 
with the following differences. 
In the case of scenario depicted in Fig. \ref{Fig:PMNoSR}, Alice uses two random bits as the input to the device,
on the other hand, in the case of scenario depicted in Fig. \ref{Fig:SDI}, Alice uses only one random bit as the input to the device. In the scenario depicted in Fig. \ref{Fig:SDI},
Bob receives the random preparations  which are the conditional states prepared by Alice's measurements
by a priory sharing a correlated quantum system with Alice, while
in the scenario depicted in Fig. \ref{Fig:PMNoSR}, Bob receives the random preparations through quantum communication.
In the case of scenario depicted in Fig. \ref{Fig:SDI}, the non-signaling
conditions from Alice to Bob \cite{Pus13} takes into account of all imperfections on the prepared states $\rho_{x_1|x_0}$
in a device-independent way, but the non-signaling condition is not satisfied in the SDI scenario as in Fig. \ref{Fig:PMNoSR}  as Alice can signal to Bob by sending the qubit. In the scenario depicted in Fig. \ref{Fig:PMNoSR},
we assume qubit dimension on Alice's side and associate an internal state $\lambda$ to account for 
noise/imperfection in the prepared states of the given dimension. On the other hand, in the SDI scenario depicted in Fig. \ref{Fig:SDI},
imperfection in Alice's device to prepare the relevant conditional states is taken into account in a device-independent way because of the nonsignaling conditions from Alice to Bob: 
\be
\sum_{x_1}\sigma_{x_1|0}=\sum_{x_1}\sigma_{x_1|1}, \quad \forall x_1
\ee
Here device-independence is applicable to Alice's device only while Bob's device works at the semi-device-independent level. Therefore, we do not need to associate an internal state $\lambda$ to
Alice's device to account for the imperfection in the conditional states prepared on Bob's side, but we associate an internal state $\mu$ on Bob's device to account for the imperfection in 
Bob's device.

In the scenario depicted in Fig. \ref{Fig:SDI}, as we model the imperfection in Bob's measuring device by an internal state $\mu$ 
which occurs with the probability $r_{\mu}$, $\sum_{\mu}r_{\mu}=1$, 
the spatial correlations as captured by the set of conditional probabilities $\{p(b|x_1;x_0,y)\}$ are given by   
\begin{eqnarray}
	p(b|x_1;x_0,y) & = & \sum _{\mu }{r}_{\mu }p(b|x_1;x_0 ,\mu ) \nonumber \\  
	& = & {\rm{Tr}}({\rho }_{x_1|x_0}\frac{\I+b{M}_{y}}{2}) \nonumber \\  
	& = & \frac{1}{2}(1+(-1)^b{\overrightarrow{S}}_{x_0x_1}\cdot {\overrightarrow{T}}_{y}), \label{BprobSDI}
\end{eqnarray}
where ${M}_{y}$ are given as in Eq. (\ref{BMop}).

To certify the quantumness of the preparations $\{\rho_{x_1|x_0}\}$ and measurements $\{M_y\}$ in the SDI scenario
depicted in Fig. \ref{Fig:SDI}, we extend the dimension witness given by Eq. (\ref{LDW}) for this scenario as follows:
\begin{align}
	Q=\left|\begin{array}{cc}p(\mathrm{0|0;0,0})-p(\mathrm{0|1;0,0}) & p(\mathrm{0|0;1,0})-p(\mathrm{0|1;1,0})\\ 
		p(\mathrm{0|0;0,1})-p(\mathrm{0|1;0,1}) & p(\mathrm{0|0;1,1})-p(\mathrm{0|1;1,1})\end{array}\right|. \label{QC}
\end{align}
Let us now demonstrate that $Q=0$ for any set of joint behaviours $\{P(x_1,b|x_0,y)=P(b|x_1;x_0,y)P(x_1|x_0)\}$ which has the form
$P(x_1,b|x_0,y)=P(x_1|x_0)P(b|y)$ (i.e., there is no correlation between the outcomes $x_1$ and $b$). 
For any set of joint behaviours which does not have correlation,
the conditional probability of observing Bob's outcome takes the form  $P(b|x_1;x_0,y)=P(b|y)$.
With this form of $P(b|x_1;x_0,y)$, evaluating $Q$ as given by Eq. (\ref{QC}) gives $Q=\braket{M_0}\braket{M_1}-\braket{M_1}\braket{M_0}$,
where $\braket{M_y}=P(0|y)-P(1|y)$.
This implies that $Q=0$ for the set of joint behaviours which do not have correlation. 
On the other hand, $Q>0$, in general, even for the unsteerable correlations. 
However, a nonzero $Q$ for the set of conditional probability distributions observed on Bob's side $\{p(b|x_1;x_0,y)\}$ implies that both  the preparations and measurements  have quantumness as we demonstrate now.
Comparing Eqs. (\ref{BprobPM}) and (\ref{BprobSDI}), it follows that any set of preparations and measurements giving rise to nonzero $Q$
in the scenario depicted in Fig. \ref{Fig:SDI} also gives rise to nonzero $W$ in the scenario depicted in Fig. \ref{Fig:PMNoSR} and vice versa.

\subsection{Certifying randomness with spatial correlations}
Let us now consider certification of randomness on Bob's side in the scenario 
depicted in Fig. \ref{Fig:SDI}. This scenario for randomness certification is similar to the random certification in a one-sided device-independent way analysed in Ref. \cite{LTB+14} with the following difference. In our context, we only assume the dimension on Bob's side, whereas in the context of Ref. \cite{LTB+14},
both the dimension and measurements are trusted on the side where randomness is certified.
Note that a nonzero value of $W$ given by Eq. (\ref{Wit}) implies randomness certification in the SDI scenario depicted in Fig. \ref{Fig:PMNoSR}.
Therefore, any set of conditional probability distributions  observed on Bob's side $\{p(b|a;x,y)\}$ in the scenario depicted in Fig. \ref{Fig:SDI} has 
intrinsic randomness if it gives rise to $Q>0$.   

To quantify the amount of randomness generated in our SDI scenario \ref{Fig:SDI}, let us define the average 
guessing probability of the events $\{x_0,x_1,y,b\}$  as 
${p}^{g}:=\frac{1}{4}\sum _{x_0,x_1,y,\mu }{r}_{\mu }\mathop{{\rm{\max }}}\limits_{b}p(b|x_1;x_0,y,\mu )$.
Then the randomness rate in our protocol can be quantified by the min-entropy of ${p}^{g}$.
By adopting the procedure given in Ref. \cite{FZW+17} which has been used to provide a tighter upper bound on the guessing probability, 
one can  derive the following upper bound on the guessing probability in our SDI scenario too:
\be \label{STQRNG}
\begin{array}{rcl}{p}^{g} & = & \frac{1}{4}\sum _{x_0,x_1,y ,\mu }{r}_{\mu }\mathop{{\rm{\max }}}\limits_{b}p(b|x_1;x_0,y,\mu )\\  &  & \le \,\frac{1}{2}(1+\sqrt{\frac{2-Q}{2}})\equiv f (Q) \end{array}
\ee
Thus, the min-entropy has a lower bound as $H_{\min}=-\log_2p_g=-\log_2f (Q) $ which implies that  $H_{\min}$ is a monotonic function of 
the quantity $Q$ which quantifies the quantumness  of the spatial correlations in our SDI scenario depicted in Fig. \ref{Fig:SDI}.
In the following, we identify what kind of quantum correlation  can act as a resource for
randomness certification in our SDI scenario.

\subsection{Quantum correlation beyond
	steering as a resource for certifying randomness}

In the context of a steering scenario where Alice performs a set of black-box measurements labelled by $x$
and Bob performs a set of quantum measurements of fixed dimension denoted by $M_y$, 
the set of joint behaviours  $\{p(a,b|x,y)\}$ detecting steerability from Alice to Bob
does not have a decomposition of the form given by  \cite{WJD07},
\be
p(a,b|x,y)= \sum^{d_\lambda-1}_{\lambda=0} p_{\lambda} p(a|x,\lambda) p(b|y; \rho_\lambda). \label{LHV-LHS}
\ee
Here, $\{p(a|x,\lambda)\}_{a,x}$ is the set of arbitrary probability distributions $p(a|x,\lambda)$ conditioned upon shared randomness/hidden variable $\lambda$ occurring with the probability $p_{\lambda}$; $\sum^{d_\lambda-1}_{\lambda=0} p_{\lambda} =1$. On the other hand, $\{p(b|y; \rho_\lambda) \}_{b,y}$ is the set of quantum probability distributions $p(b|y; \rho_\lambda)=\tr(\Pi_{b|y}\rho_\lambda)$, arising from some local hidden state $\rho_\lambda$ due to projective measurement associated with the projector $\Pi_{b|y}$, and
$d_\lambda$ denotes the dimension of the shared randomness/hidden variable $\lambda$.
The above decomposition is called a local hidden variable-local hidden state (LHV-LHS) model.
If the correlation arising from the given steering scenario does not have steerability, it can still have  quantumness in the form of superunsteerability  \cite{DBD+17,DAS201855}
when there is a restriction on the amount of shared randomness.

\vspace*{12pt}
\noindent
{\bf Definition~1.}
Superunsteerability  \cite{DBD+17,DAS201855} is defined as the requirement for a larger dimension of the classical variable that the steering party (Alice) preshares with the trusted party (Bob) for simulating the given unsteerable correlations, than that of the quantum state which reproduces them.
Suppose we have a quantum state in $\mathbb{C}^{d_A}\otimes\mathbb{C}^{d_B}$
and measurements which produce a unsteerable bipartite  box $\{ p(a,b|x,y) \}_{a,x,b,y}$.
Then, superunsteerability holds if and only if (iff) there is no decomposition of the box in the form given by Eq. (\ref{LHV-LHS}),
with  $d_\lambda\le d_A$.
\vspace*{12pt}
\noindent

Superunsteerability provides an operational characterization to the quantumness of unsteerable  boxes \cite{DBD+17}. Note that superunsteerability is defined in the standard steering scenario. The only difference is that there is a constraint on the dimension of the resources producing the correlations in the context of superunsteerability. 

Here we should emphasize that the above definition of superunsteerability is also applicable in the `dimension-bounded quantum steering scenario, where Bob's measurements are uncharacterized POVMs acting on a fixed dimensional Hilbert space. Hence, in this case, the probability distributions $p(b|y; \rho_\lambda)$ appeared in Eq.(\ref{LHV-LHS}) is given by, $p(b|y; \rho_\lambda)=\tr(M_{b|y}\rho_\lambda)$, where $M_{b|y}$s are POVM elements with $M_{b|y} \geq 0$ $\forall$ $b,y$ and $\sum_{b} M_{b|y} = \mathbb{I}$ $\forall$ $y$.

We now demonstrate the following proposition.
\vspace*{12pt}
\noindent
\begin{proposition}
	Suppose any nonzero value of the quantity $Q$  given by Eq. (\ref{QC})  arises from a two-qubit state. Then it implies the presence of  superunsteerability or steering in the correlation $\{P(x_1,b|x_0,y)\}$.
	\label{propo01}
\end{proposition}
\vspace*{12pt}
\noindent
{\bf Proof.} 
In the context of the SDI scenario depicted in Fig. \ref{Fig:SDI}, any two-qubit state which is either a classical-quantum (CQ) state,
\begin{equation}
	\rho_{CQ} = \sum_{i=0}^{1} p_i |i\rangle\langle i| \otimes \chi_i
	\label{eq:cq}
\end{equation}
or  a quantum-classical (QC) state,
\begin{equation}
	\label{cq:eq}
	\rho_{QC} = \sum_{j=0}^{1} p_{j}  \phi_j  \otimes | j \rangle \langle j|,
\end{equation}
with  $ \{  |  i  \rangle \}$  and  $  \{ |  j  \rangle  \}$ being
orthonormal sets, and, $\chi_i$ and $\phi_j$ being arbitrary quantum
states,
cannot be used to demonstrate superunsteerability or steering \cite{DBD+17}. Next, it will be demonstrated that 
any such two-qubit state  cannot
be used to give rise to nonzero value of the quantity $Q$ given by Eq. (\ref{QC}).

The classical-quantum states given by Eq. (\ref{eq:cq}) can be decomposed as follows
\ba
\rho_{CQ}&=&\frac{1}{4}\big[ \I \otimes \I + (p_0-p_1) \hat{r} \cdot \vec{\sigma} \otimes \I + \I \otimes \left(p_0 \vec{s}_0+  p_1 \vec{s}_1 \right)\cdot \vec{\sigma} \nonumber \\
&+&\hat{r} \cdot \sigma \otimes \left( p_0\vec{s}_0-p_1 \vec{s}_1 \right)\cdot \vec{\sigma} \big],
\ea
where $\hat{r}$ is the Bloch vector of the projectors $\ketbra{i}{i}$ and  $\vec{s}_i$ are the Bloch vectors of the quantum states $\chi_i$. Since we only make qubit assumption on Alice's side, her measurements are \textit{a priory}  POVM
with elements given by
\be
M_{x_1|x_0}=\gamma^{x_1}_{x_0} \I+ (-1)^{x_1} \frac{\eta_{x_0}}{2} \hat{u}_{x_0} \cdot \vec{\sigma}, \label{POVMalice}
\ee
where $x_0 \in \{0,1\}$, $x_1 \in \{0,1\}$, $\gamma^{0}_{x_0}+\gamma^{1}_{x_0}=1$ $\forall$ $x_0$ and $0 \leq \gamma^{x_1}_{x_0} \pm \frac{\eta_{x_0}}{2} \leq 1$ $\forall$ $x_0, x_1$.
On the other hand, Bob's
measurements are  also  POVMs
with elements given by
\be
M_{b|y}=\gamma^{b}_{y} \I+ (-1)^{b} \frac{\eta_{y}}{2} \hat{u}_{y} \cdot \vec{\sigma}, \label{POVMbob}
\ee
where $y \in \{0,1\}$, $b \in \{0,1\}$, $\gamma^{0}_{y}+\gamma^{1}_{y}=1$ $\forall$ $y$ and $0 \leq \gamma^{b}_{y} \pm \frac{\eta_{y}}{2} \leq 1$ $\forall$ $y, b$. Now, It can be easily checked that the above measurement settings always lead to $Q=0$ for the classical-quantum states $\rho_{CQ}$.

The quantum-classical states given by Eq. (\ref{eq:cq}) can be decomposed as follows
\ba
\rho_{QC}&=&\frac{1}{4}\big[ \I \otimes \I + \left(p_0 \vec{s}_0+  p_1 \vec{s}_1 \right)\cdot \vec{\sigma} \otimes  \I + \I \otimes (p_0-p_1) \hat{r} \cdot \vec{\sigma}  \nonumber \\
&+&  \left( p_0\vec{s}_0-p_1 \vec{s}_1 \right)\cdot \vec{\sigma}  \otimes  \hat{r} \cdot \sigma \big],
\ea
where    $\vec{s}_i$ are the Bloch vectors of the quantum states $\phi_j$ and $\hat{r}$ is the Bloch vector of the projectors $\ketbra{j}{j}$. For the measurement operators given  by Eqs. (\ref{POVMalice}) and (\ref{POVMbob}), the above quantum-classical states always give $Q=0$.

Note that in our SDI scenario, 
any unsteerable box having the decomposition (\ref{LHV-LHS}) can be reproduced by a classical-quantum state of
the form $\sum^{d_\lambda-1}_{\lambda=0} p_\lambda \ketbra{\lambda}{\lambda} \otimes \rho_\lambda$ \cite{DBD+17},
where $\{\ketbra{\lambda}{\lambda}\}$ forms an orthonormal basis in  $\mathbb{C}^{d_\lambda}$, with $d_\lambda \le 4$ \footnote{Here the dimension of the hidden variable is upper bounded by $4$
	since any local as well as unsteerable correlation corresponding to this scenario can be simulated by shared classical randomness of dimension $d_\lambda \le 4$ \cite{DW15,DBD+17}.}.
This implies that any unsteerable box produced from a two-qubit state that requires a hidden variable of dimension $d_\lambda =2$
for providing a LHV-LHS model (i.e., any unsteerable box produced from a two-qubit state that is not superunsteerable)
can be simulated by a two-qubit state which admits the form of the classical-quantum  state given by Eq.(\ref{eq:cq}).
Thus, for any such unsteerable box,
$Q=0$. On the other hand, $Q>0$ produced from a two-qubit state certifies that the box does not arise from a classical-quantum or quantum-classical state. Hence, that box either
requires the hidden variable of dimension $d_\lambda>2$ for providing a LHV-LHS model (i.e., the box is superunsteerable) or that box is steerable
$\square\,$.

Now, in order to illustrate proposition \ref{propo01}, let us give an example where non-zero $Q$ arising from two-qubit state certifies superunsteerability or steering. Consider the following white-noise BB84 family,
\begin{equation}
	\label{bb84}
	P_{BB84}(x_1 b|x_0 y) = \frac{1 + (-1)^{x_1 \oplus b \oplus x_0.y} \delta_{x_0,y} V }{4},
\end{equation}
where $V$ is a real number such that $0 < V \leq 1$; $x_0$, $y$ denote the input variables on Alice's and Bob's sides respectively; and $x_1$, $b$ denote the outputs on Alice's and Bob's sides respectively, $x, y, a, b \in \{0, 1\}$. This family of correlations can be produced from the two-qubit Werner state,
\begin{equation}
	\label{w}
	\rho_V = V | \psi^- \rangle \langle \psi^-| + \frac{1-V}{4} \I \otimes \I,
\end{equation}
with $|\psi^- \rangle = \frac{1}{\sqrt{2}} (|01 \rangle - |10 \rangle)$  and $0 < V \leq 1$ if Alice performs the projective measurements of observables corresponding to the operators $A_0 = - \sigma_z$ and $A_1 = \sigma_x$, and Bob performs projective measurements of observables corresponding to the operators $B_0 =  \sigma_z$ and $B_1 = \sigma_x$ \cite{DBD+17}. Note that, for $V=1$, the correlation (\ref{bb84}) corresponds to the quantum key distribution protocol of Ref. \cite{BBM92}.

The above family of correlations is superunsteerable for $0< V \leq \frac{1}{\sqrt{2}}$ and steerable for $\frac{1}{\sqrt{2}} < V \leq 1$ \cite{DBD+17}. It can be easily checked that for the white-noise BB84 family given by Eq.(\ref{bb84}), $Q = V^2$. Hence, for this family of correlations, $Q>0$ for $V>0$. Non-zero value of $Q$ arising from the white-noise BB84 family, therefore, certifies superunsteerability or steering.

In the proof of proposition \ref{propo01}, we have demonstrated that any nonzero $Q$ certifies superunsteerability of two-qubit states 
without any assumption on the measurements.  Thus, we obtain the following corollary of this proposition.
\vspace*{12pt}
\noindent
\begin{cor}
	In case of unsteerable correlations,
	any nonzero value of the quantity $Q$ given by Eq. (\ref{QC}) certifies superunsteerability in a SDI way, i.e., 
	with the assumption of qubit dimension for each party and no assumption on the measurements. 
\end{cor}
\vspace*{12pt}
\noindent

The above proposition \ref{propo01} demonstrates that superunsteerability or steering is linked with non-zero values of $Q$ in the context of two-qubit states. As discussed earlier, the genuine randomness produced in our SDI scenario is quantified by $Q$. Hence, quantum correlation as captured by superunsteerability  or steering acts as a resource for genuine randomness 
generation in the SDI depicted in Fig. \ref{Fig:SDI}. Thus, the witness $Q$ provides self-testing processing of 
quantum information by certifying the quantum phenomenon called superunsteerability/steering in a SDI way.

Note that any superunsteerable or steerable state must be a non-zero discord state (which cannot be expressed in classical-quantum/quantum-classical/classical-classical form) \cite{DBD+17}. Since, proposition \ref{propo01} demonstrates that any non-zero value of $Q$ arising from any two-qubit state certifies superunsteerability or steering, it certifies non-zero discord as well. But the converse may not be true always.  Next, we will address which two-qubit  nonzero discord  states can be used to demonstrate nonzero $Q$. 

In Ref.  \cite{Luo08}  it was shown that any two-qubit state, up to local unitary transformations, can be reduced to the following form:
\be
\zeta=\frac{1}{4}\left(\I \otimes \I + \vec{a} \cdot \vec{\sigma} \otimes \I
+ \I \otimes \vec{b} \cdot \vec{\sigma} +\sum^3_{i=1} c_i \sigma_i \otimes \sigma_i \right), \label{can02q}
\ee
where  $\{\vec{a},\vec{b},\vec{c}\} \in \mathbf{R}^3$ are vectors with norm less than or equal to unity and $\vec{a}^2+\vec{b}^2+\vec{c}^2\le3$.
Let Alice's projective qubit measurements are given by the measurement operators
\be
M_{x_1|x_0}=\frac{1}{2}\left( \I+ (-1)^{x_1} \hat{u}_{x_0} \cdot \vec{\sigma} \right). \label{MOalice}
\ee
For such measurements, the conditional states prepared on Bob's side are given by
\be
\rho_{x_1|x_0}=\frac{1}{2}\left(\I+ \frac{\sum^3_{i=1}\left(b_i+(-1)^{x_1}{u_i}_{x_0}c_i\right)\sigma_i}{1+(-1)^{x_1}  \hat{u}_{x_0} \cdot \vec{a}}\right).
\ee

For simplicity, without lose of generality, let us consider the two-qubit states given by Eq. (\ref{can02q})
with $|c_1|\ge |c_2| \ge |c_3| $ and let Alice performs measurements along the directions $\hat{u}_{0}=\hat{x}$
and $\hat{u}_{1}=\hat{y}$ and Bob performs the measurement along the directions $\vec{T}_{0}=\hat{x}$
and $\vec{T}_{1}=\hat{y}$. Note that for this choice of measurement settings, the correlations arising from certain two-qubit states violate the two-setting linear steering inequality and the correlations arising from maximally entangled two-qubit states violate the two-setting linear steering inequality maximally \cite{CA16}. 
With this choice of the two-qubit states and measurements, $Q$ as given by Eq. (\ref{QC}) has been evaluated
to be of the form given by 
\be
Q=\frac{|(c_1-a_1b_1)(c_2-a_2b_2)-a_1b_1a_2b_2|}{(1-a^2_1)(1-a^2_2)}. \label{nzQ}
\ee
The right hand side of the above quantity is nonzero if and only if the state has a nonzero discord from Alice to Bob
as well as from Bob to Alice \cite{DVB10} for the two-qubit states with either Alice's or Bob's marginal being maximally mixed, i.e., $\vec{a}=0$ or $\vec{b}=0$.

Therefore, we arrive at the following:
\vspace*{12pt}
\noindent
\begin{proposition}
	Consider the set of measurements which can be used to demonstrate the maximal violation of the two setting linear steering inequality by the maximally entangled two-qubit state. For such measurements, a given two-qubit state gives rise to nonzero $Q$  if and only if it
	is neither a classical-quantum state nor a quantum-classical state, with one of the marginals maximally  mixed. 
	\label{prop2}
\end{proposition}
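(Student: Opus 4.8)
The plan is to reduce the statement to the closed-form expression for $Q$ already obtained in Eq.~(\ref{nzQ}), combined with one explicit state decomposition. First I would fix the ``suitable measurements'' to be Alice measuring along $\hat{u}_0=\hat{x}$, $\hat{u}_1=\hat{y}$ and Bob measuring along $\vec{T}_0=\hat{x}$, $\vec{T}_1=\hat{y}$; these are exactly the settings for which the maximally entangled state maximally violates the two-setting linear steering inequality \cite{CA16}, so the steering property demanded in the statement is automatic. By local unitaries I would bring the state to the canonical form of Eq.~(\ref{can02q}) with the ordering $|c_1|\ge|c_2|\ge|c_3|$, so that the two measured directions coincide with the two largest correlation axes. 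With these choices Eq.~(\ref{nzQ}) already expresses $Q$ in terms of $a_1,a_2,b_1,b_2$ and $c_1,c_2$.

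Next I would impose that one marginal is maximally mixed. If $\vec{a}=0$ then $a_1=a_2=0$ and Eq.~(\ref{nzQ}) collapses to $Q=|c_1c_2|$; if $\vec{b}=0$ then $b_1=b_2=0$ and $Q=|c_1c_2|/[(1-a_1^2)(1-a_2^2)]$. In either case the denominator is strictly positive, so $Q=0$ if and only if $c_1c_2=0$, which by the ordering $|c_1|\ge|c_2|$ is equivalent to $c_2=0$ (and hence $c_3=0$). Thus, under the maximally-mixed-marginal assumption, the vanishing of $Q$ is equivalent to the correlation matrix carrying at most one nonzero entry.

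The heart of the argument is then to show that $c_2=0$ together with a maximally mixed marginal forces the state to be classical-quantum or quantum-classical, which I would establish by an explicit decomposition rather than by invoking a general zero-discord criterion. Taking $\vec{a}=0$ and $c_2=c_3=0$, the state is $\frac14(\openone\otimes\openone+\openone\otimes\vec{b}\cdot\vec{\sigma}+c_1\sigma_1\otimes\sigma_1)$; resolving $\openone$ and $\sigma_1$ on Alice's side in the eigenbasis $\{\ket{+},\ket{-}\}$ of $\sigma_1$ yields $\zeta=\tfrac12\big(\ketbra{+}{+}\otimes\chi_++\ketbra{-}{-}\otimes\chi_-\big)$ with $\chi_\pm=\tfrac12(\openone+(\vec{b}\pm c_1\hat{x})\cdot\vec{\sigma})$, which is manifestly of the classical-quantum form of Eq.~(\ref{eq:cq}). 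The symmetric computation for $\vec{b}=0$ produces a quantum-classical state. Taking the contrapositive, any state that is neither classical-quantum nor quantum-classical and has one marginal maximally mixed must satisfy $c_2\ne0$ (and hence $c_1\ne0$), so that $Q>0$, which is the claim.

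The step I expect to be the main obstacle is the consistency check in the opposite direction rather than the decomposition itself: one must be certain that no state with two genuinely nonzero diagonal correlations can remain classical-quantum or quantum-classical once a marginal is maximally mixed, since otherwise $Q=|c_1c_2|$ would spuriously certify a zero-discord state and contradict Proposition~\ref{propo01}. This is controlled by the fact already used in the proof of Proposition~\ref{propo01} that the correlation matrix of any classical-quantum or quantum-classical state factorizes as the outer product $\hat{r}\otimes(p_0\vec{s}_0-p_1\vec{s}_1)$ and is therefore rank one, so in the canonical frame at most one $c_i$ survives; the ordering $|c_1|\ge|c_2|\ge|c_3|$ guarantees that this surviving coefficient is precisely the one probed by the $\hat{x},\hat{y}$ settings, which is why the particular choice of measurements is essential to the statement.
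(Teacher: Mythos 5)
Your proposal is correct, and its skeleton coincides with the paper's: the same measurement directions $\hat{u}_0=\hat{x}$, $\hat{u}_1=\hat{y}$, $\vec{T}_0=\hat{x}$, $\vec{T}_1=\hat{y}$ (justified by the maximal steering violation of \cite{CA16}), the same canonical form of Eq.~(\ref{can02q}), and the same closed-form expression Eq.~(\ref{nzQ}), which under $\vec{a}=0$ or $\vec{b}=0$ collapses to $Q\propto|c_1c_2|$. Where you genuinely differ is in the last, decisive step. The paper disposes of it by citation: it asserts that the right-hand side of Eq.~(\ref{nzQ}) is nonzero if and only if the state has nonzero discord in both directions, invoking the zero-discord characterization of Ref.~\cite{DVB10}. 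You instead prove the needed implication from scratch: the explicit resolution $\zeta=\tfrac12\bigl(\ketbra{+}{+}\otimes\chi_++\ketbra{-}{-}\otimes\chi_-\bigr)$ with $\chi_\pm=\tfrac12\bigl(\openone+(\vec{b}\pm c_1\hat{x})\cdot\vec{\sigma}\bigr)$ shows that $c_2=c_3=0$ together with a maximally mixed marginal forces a classical-quantum (or, symmetrically, quantum-classical) state, and the rank-one structure of the correlation matrix of any CQ/QC state supplies the converse, guaranteeing consistency with Proposition~\ref{propo01}. Your route buys self-containedness — no external discord criterion is needed, and the role of the ordering $|c_1|\ge|c_2|\ge|c_3|$ becomes transparent — at the cost of length; the paper's route is shorter but opaque at exactly the point you call the heart of the argument. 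Two small points you may wish to make explicit: positivity of $\chi_\pm$ (i.e., $|\vec{b}\pm c_1\hat{x}|\le1$) is automatic because they are proportional to conditional states of the positive operator $\zeta$; and for $\vec{b}=0$ the strict positivity of the denominator $(1-a_1^2)(1-a_2^2)$ needs $|\vec{a}|<1$, which also follows from your hypothesis, since a pure Alice marginal would make the state a product state and hence classical-quantum.
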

\vspace*{12pt}
\noindent

Let us now analyze which nonzero discord states having nonmaximally mixed marginals on both the sides can lead 
to nonzero $Q$.
Consider the two-qubit nonzero discord state 
with $a_1=a_2=b_1=b_2=c_1=c_2=1/2$ and $a_3=b_3=0=c_3=0$. For this state, 
$Q$ as given by Eq. (\ref{nzQ}) takes the value $0$. In Ref. \cite{DBD+17}, it has been demonstrated that such separable state
can be used to demonstrate superunsteerability with minimum dimension of the shared being equal to $3$.

Let now demonstrate that $Q=0$ for the above such superunsteerable state even if Alice and Bob perform measurements which optimize the violation of the linear steering 
inequality \cite{CA16}.
For this choice of measurements which have the directions given by 
$\hat{u}_{0}=\frac{c_1\hat{x}+c_2\hat{y}}{\sqrt{c^2_1+c^2_2}}$
and $\hat{u}_{1}=\frac{c_1\hat{x}-c_2\hat{y}}{\sqrt{c^2_1+c^2_2}}$ on Alice's side and  $\vec{T}_{0}=\frac{\hat{x}+\hat{y}}{\sqrt{2}}$
and $\vec{T}_{1}=\frac{\hat{x}-\hat{y}}{\sqrt{2}}$ on Bob's side, $Q$ for the two-qubit states (\ref{can02q})
with $|c_1|\ge |c_2| \ge |c_3| $ has been evaluated to be of the form given by,
\begin{align}
	&Q=  \left|\frac{4 c_1 c_2 (a_2 b_2 c_1 + a_1 b_1 c_2 - c_1 c_2)}{a_1^4 c_1^4 + (-2 + a_2^2 c_2^2)^2 - 2 a_1^2 c_1^2 (2 + a_2^2 c_2^2)}\right|. \label{nzQ1}
\end{align}
Note that $Q$ as given above takes the value zero for the above-mentioned superunsteerable state with $a_1=a_2=b_1=b_2=c_1=c_2=1/2$ and $a_3=b_3=0=c_3=0$. 

In fact, it can be checked that when Alice and Bob perform  arbitrary POVMs given by Eqs. (\ref{POVMalice}) and (\ref{POVMbob}), respectively, on the above-mentioned superunsteerable state with $a_1=a_2=b_1=b_2=c_1=c_2=1/2$ and $a_3=b_3=0=c_3=0$, one will get $Q=0$.  Thus, we arrive at the following  proposition.
\vspace*{12pt}
\noindent
\begin{proposition}
	There exists superunsteerable two-qubit states which cannot be used to demonstrate nonzero $Q$. Such two-qubit states have nonmaximally mixed marginals for both the sides.
	\label{proposition3}
\end{proposition}
\vspace*{12pt}
\noindent

Thus, we identify a new classification of two-qubit states based on the fact that a discordant state may or may not be superunsteerable and a superunsteerable state may or may not be used for selftesting quantum information processing (QIP) (see fig. \ref{Fig:Hirarchy}).
\begin{figure}[h!]
	\begin{center}
		\includegraphics[width=7.5cm]{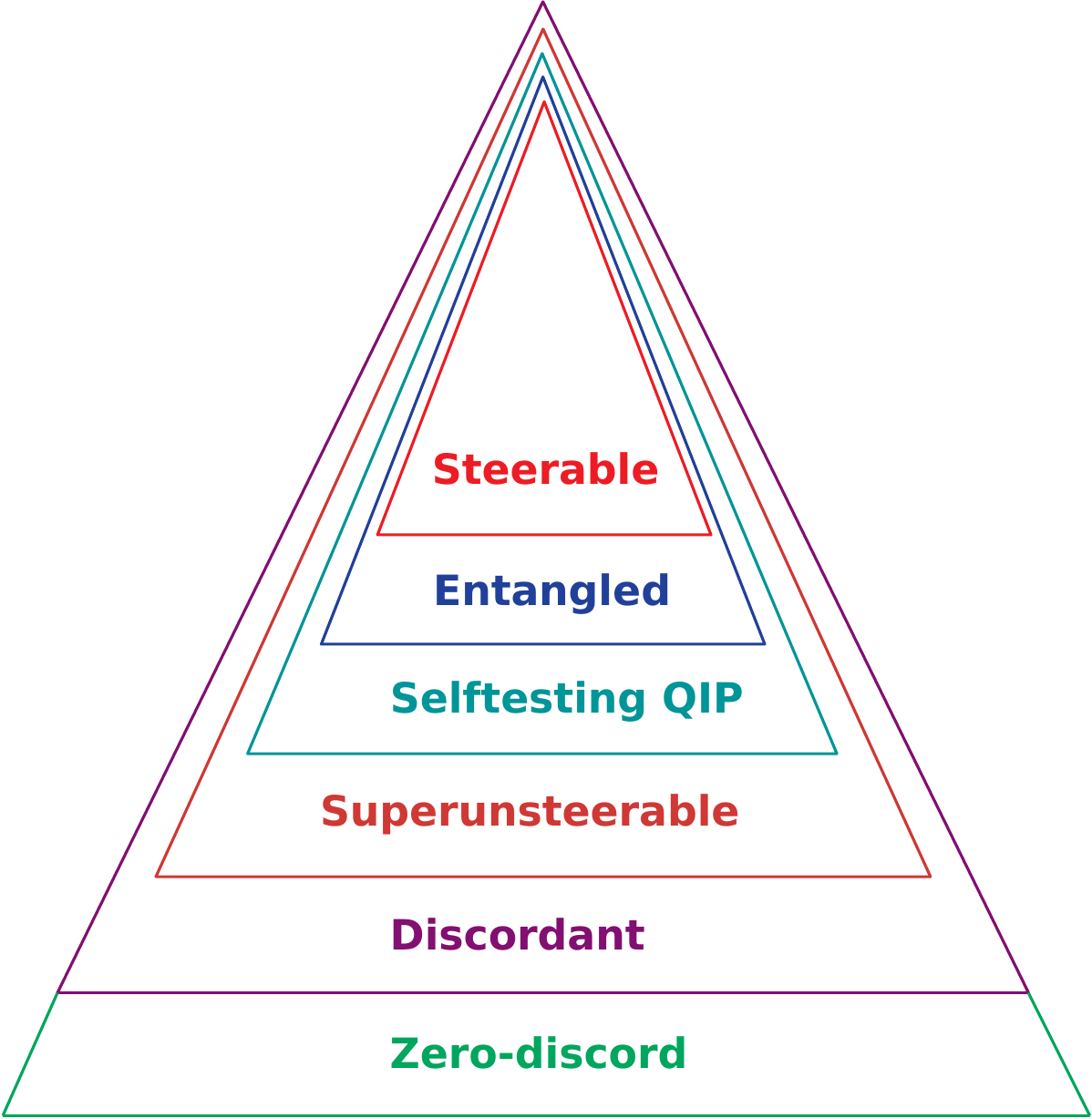}
	\end{center}
	\caption{Hierarchy of correlations in two-qubit states. In the two-setting steering scenario 
		that we consider in this work, nonzero discord (discordant) states can be classified into 
		states which do not demonstrate superunsteerability, 
		superunsteerable states useless for self-testing QIP and useful for self-testing QIP. 
		\label{Fig:Hirarchy}}
\end{figure}

\section{Certifying superlocality}

\subsection{Prepare and measure scenario with correlated devices}

\begin{figure}[h!]
	\begin{center}
		\includegraphics[width=7.5cm]{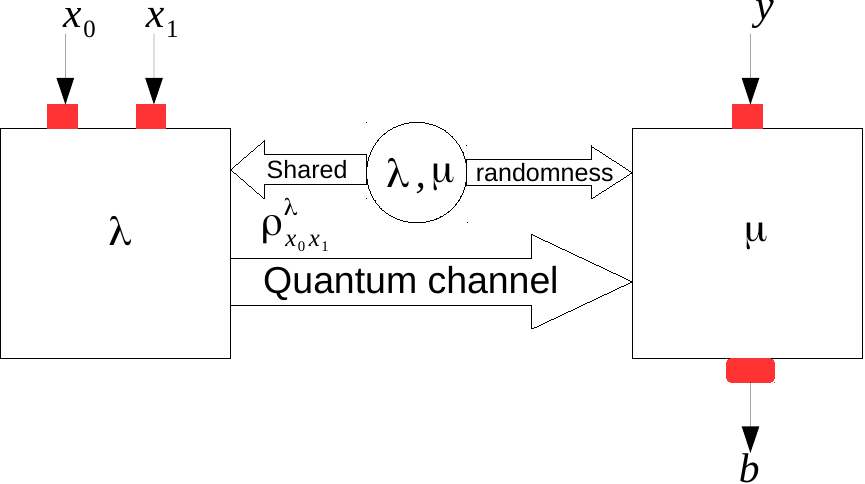}
	\end{center}
	\caption{Prepare and measure  scenario with correlated devices. 
		Here, $x_0,x_1,y,b \in \{0,1\}$ and $\rho^{\lambda}_{x} \in \mathcal{B}(\mathbb{C}^{2})$.
		\label{Fig:PMSR}}
\end{figure}

Let us consider a SDI scenario as in Fig. \ref{Fig:PMSR} which is 
the P$\&$M scenario as in Fig. \ref{Fig:PMNoSR} but with the correlated devices, i.e., $p(\lambda, \mu)\ne q_\lambda \cdot r_\mu$. 
In such a SDI scenario where the devices are not independent, it has been demonstrated that 
the set of distributions $\{p(b|x_0x_1,y)\}$ that achieve the maximal value $1$ for the witness $W$ as given by Eq. (\ref{Wit}) 
can also achieved by a classical bit \cite{PB11}.
Thus, in the presence of shared randomness, the P$\&$M scenario cannot be used to certify the randomness by observing nonzero $W$.

In the context of scenario as in Fig. \ref{Fig:PMSR},
quantum advantage of $2$-to-$1$ random-access code was studied \cite{ALM+08} 
and a linear dimension witness was derived \cite{GBH+10}. 
In  a $2$-to-$1$ quantum random-access code which is implemented using the protocol given in Fig. \ref{Fig:PMSR},
Bob's goal is to guess Alice's $y$th bit.
The average success probability of Bob to guess Alice's $y$th bit is given by 
\be
P_{B}:=\frac{1}{8}\sum_{x_0,x_1,y}P(b=x_y|x_0x_1,y).
\ee
If Alice sends a classical bit to Bob, the optimal
average  success probability is upper bounded by $3/4$. On the other hand, if Alice sends a
qubit to Bob, then the average success probability of Bob can beat this classical bound.
The quantum strategy  that gives the optimal average success probability of $\frac{1}{2}\left(1+\frac{1}{\sqrt{2}}\right)$ is given by
\begin{align}
	\rho_{00}&=\frac{1}{2}\left(\I+\frac{\sigma_x+\sigma_y}{\sqrt{2}}\right)   \nonumber  \\
	\rho_{01}&=\frac{1}{2}\left(\I+\frac{\sigma_x-\sigma_y}{\sqrt{2}}\right)   \nonumber  \\
	\rho_{10}&=\frac{1}{2}\left(\I-\frac{\sigma_x-\sigma_y}{\sqrt{2}}\right)   \nonumber  \\
	\rho_{11}&=\frac{1}{2}\left(\I-\frac{\sigma_x+\sigma_y}{\sqrt{2}}\right)   \label{OPTprep}
\end{align}
and 
\begin{align}
	M_0&=\sigma_x, \quad     M_1=\sigma_y.  \label{OPTmeas}
\end{align}

In Ref. \cite{PB11}, a relationship between the average success probability of $2$-to-$1$
random-access code and the linear dimension witness has been demonstrated. 
The  linear dimension witness is given by 
\begin{align}
	W_{L}&:=P(0|00,0)+P(0|00,1)+P(0|01,0)-P(0|01,1) \nonumber \\
	&-P(0|10,0)+P(0|10,1)-P(0|11,0)-P(0|11,1) \le 2. \label{LDW}
\end{align}
By assuming qubit dimension, the violation of the linear dimension witness inequality (\ref{LDW}) certifies
the quantumness of preparations and measurements. The quantum strategy as given by Eqs. (\ref{OPTprep}) and (\ref{OPTmeas})
gives the maximal quantum violation of $2\sqrt{2}$ for the inequality given by Eq. (\ref{LDW}).
This is related to the maximal quantum violation of the CHSH inequality \cite{CHS+69} as discussed in the 
next section. 
The average success probability of $2$-to-$1$ random-access code is related to the linear dimension witness as  
$P_{B}=\frac{W_L+4}{8}$. From this relationship, it follows that the violation of the  inequality $P_{B} \le 3/4$, 
which certifies the quantum advantage of the random-access code, implies
the violation of the inequality given by (\ref{LDW})  and vice versa.
In the P$\&$M scenario as in Fig. \ref{Fig:PMSR}, the quantum advantage of the random-access codes implied by the violation of the  inequality 
$P_{B} \le 3/4$ or the quantumness as certified by the violation of the linear dimension witness inequality  has been used 
to demonstrate secure quantum key distribution \cite{PB11} and genuine randomness generation \cite{LPY+12}.

\subsection{Relating the quantumness of sequential correlations with that of spatial correlations}

\begin{figure}[h!]
	\begin{center}
		\includegraphics[width=7.5cm]{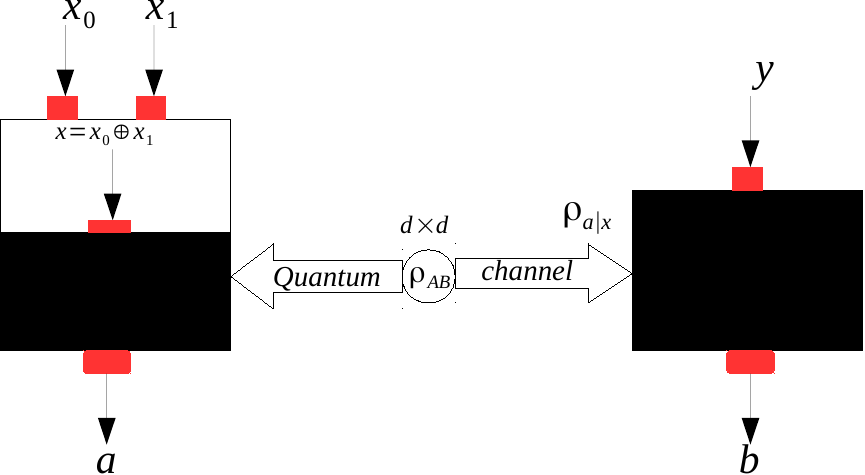}
	\end{center}
	\caption{A device-independent scenario  where a shared bipartite state acts as the quantum channel
		to implement the $2$-to-$1$ random-access code. Here, $\{\rho_{0|0}=\rho_{00}, \rho_{1|0}=\rho_{11}, 
		\rho_{0|1}=\rho_{01}, \rho_{1|1}=\rho_{10}\}$.
		\label{Fig:DI}}
\end{figure}
To observe the quantumness of sequential correlations in the
above scenario depicted in Fig. \ref{Fig:PMSR} as the quantumness of spatial correlations, 
let us now consider the scenario as in Fig. \ref{Fig:DI}, which is a fully device-independent scenario 
where Alice and Bob share a bipartite quantum state and
performs  two dichotomic black box measurements.
Here we consider the fully device-independent scenario since shared randomness is used in the 
corresponding prepare and measure scenario.
In this fully device-independent scenario, let us observe the spatial correlations through the set of
conditional probabilities $\{p(b|a;x,y)\}$ given by
\be
p(b|a;x,y)=\frac{\tr  \left( M^{(1)}_{a|x} \otimes M^{(2)}_{b|y} \rho_{AB} \right) }{\tr \left( M^{(1)}_{a|x} \otimes \I  \rho_{AB} \right) }
\ee
where $\rho_{AB}$ is a bipartite quantum state of arbitrary local Hilbert space dimension
and $M^{(1)}_{a|x}$ and $M^{(2)}_{b|y}$ are the measurement operators of Alice and Bob's measurements respectively.
The scenario in Fig. \ref{Fig:DI} supplemented with one bit of classical communication from Alice to Bob  can be used to implement  the $2$-to-$1$ random-access code  \cite{PZ10}.
Therefore, the violation of the  inequality $P_{B} \le 3/4$ which can be used to certify the quantum advantage of  the $2$-to-$1$  
random-access code assisted by shared bipartite state implies that the set of joint behaviours $\{P(a,b|x,y)=p(b|a;x,y)p(a|x)\}$ 
violates a Bell inequality.
Thus, we can conclude that the quantumness in the P$\&$M scenario with the correlated devices 
is manifested in the form of Bell-nonlocal  
correlations in the corresponding scenario where a shared bipartite state is used as  the quantum channel. Further,
observing the violation of the inequality (\ref{LDW}) in the context of producing the distributions  $\{P(b|x_0x_1,y)\}$
using a shared bipartite state witnesses Bell nonlocality. Thus, the linear dimension witnesess  provide
Bell inequalities in terms of the conditional probabilities \footnote{In Refs. \cite{PKM14,MBM15}, steering inequalities
	and separability inequalities are derived respectively using the conditional probabilities. To our knowledge,
	Bell inequalities in terms of the conditional probabilities have not been studied so far.}.

\subsection{Quantum correlations beyond Bell-nonlocality as a manifestation of quantumness of sequential correlations}
In a bipartite Bell scenario, superlocality \cite{DW15} is defined as follows.

\vspace*{12pt}
\noindent
{\bf Definition~2.}
Suppose we have a quantum state in $\mathbb{C}^{d_A}\otimes\mathbb{C}^{d_B}$
and measurements which produce a local bipartite box $\{p(a,b|x,y)\}$.
Then, superlocality holds iff there is no decomposition of the box in the form,
\begin{equation}
	p(a,b|x,y)=\sum^{d_\lambda-1}_{\lambda=0} p_{\lambda} p(a|x, \lambda) p(b|y, \lambda) \hspace{0.3cm} \forall a,x,b,y,
\end{equation}
with dimension of the shared randomness/hidden variable $d_\lambda\le$ min($d_A$, $d_B$).  Here $\sum_{\lambda} p_{\lambda} = 1$, $p(a|x, \lambda)$ and $p(b|y, \lambda)$ denotes arbitrary probability distributions arising from LHV $\lambda$ ($\lambda$ occurs with probability $p_{\lambda}$).
\vspace*{12pt}
\noindent

Superlocality provides an operational characterization to the quantumness of local  boxes \cite{JAS17,JDS+18}.

Previously, it was shown that the quantity $Q$ given by Eq. (\ref{QC}) of any two-qubit state is linked with superunsteerability in the scenario depicted in Fig. \ref{Fig:SDI}. Now, we will show that $Q$ can be linked with superlocality as well in the present context.
\vspace*{12pt}
\noindent
\begin{proposition}
	Suppose any nonzero value of the quantity $Q$  given by Eq. (\ref{QC})  arises from a two-qubit state in the context of the scenario depicted in Fig. \ref{Fig:DI}. Then it implies the presence of  superlocality or Bell-nonlocality in the correlation $\{P(a,b|x,y)\}$. 
\end{proposition}
\vspace*{12pt}
\noindent
{\bf Proof.} Note that in our device-independent scenario, 
any local box having the following LHV-LHV decomposition,
\begin{equation}
	p(a,b|x,y)=\sum^{d_\lambda-1}_{\lambda=0} p_{\lambda} p(a|x, \lambda) p(b|y, \lambda) \hspace{0.3cm} \forall a,x,b,y,
\end{equation}
can be reproduced by performing appropriate measurements on the quantum-classical state of
the form $\sum^{d_\lambda-1}_{\lambda=0} p_\lambda \rho_{\lambda} \otimes \ketbra{\lambda}{\lambda}$, or on the classical-quantum state of
the form $\sum^{d_\lambda-1}_{\lambda=0} p_\lambda \ketbra{\lambda}{\lambda} \otimes \rho_{\lambda}$
where $\{\ketbra{\lambda}{\lambda}\}$ forms an orthonormal basis in  $\mathbb{C}^{d_\lambda}$, with $d_\lambda \le 4$ \footnote{Here the dimension of the hidden variable is upper bounded by $4$
	since any local correlation corresponding to this scenario can be simulated by shared classical randomness of dimension $d_\lambda \le 4$ \cite{DW15}.} This implies that any local box arising from a two-qubit state that requires a hidden variable of dimension $d_\lambda =2$
for providing a LHV-LHV model (i.e., any local box arising from a two-qubit state that is not superlocal)
can be simulated by a two-qubit state which admits the form of the classical-quantum  state or quantum-classical state. Now, such state cannot give non-zero $Q$ (follows from the proof of proposition \ref{propo01}).
Thus, for any local box arising from a two-qubit state that is not superlocal, $Q=0$. On the other hand, $Q>0$ produced from a two-qubit state certifies that the box does not arise from a classical-quantum or quantum-classical state. Hence, that box either
requires the hidden variable of dimension $d_\lambda>2$ for providing a LHV-LHV model (i.e., the box is superlocal) or that box is Bell-nonlocal  $\square\,$.

As discussed in Sec. \ref{P&MID}, nonzero $Q$ provides selftesting random number generation through 
witnessing superunsteerability/steering in a SDI way. It is readily seen that 
nonzero $Q$ in the context of above mentioned Bell scenario also 
provides selftesting random number generation as given  by Eq. (\ref{STQRNG}) through 
witnessing superlocality or Bell-nonlocality in a SDI way.

Note that for the measurements that has been used to demonstrate superunsteerability of any two-qubit
nonzero discord state in Proposition \ref{prop2}, the correlations arising from certain two-qubit states violate 
the two-setting linear steering inequality and the correlations arising from maximally entangled two-qubit states violate 
the two-setting linear steering inequality maximally \cite{CA16}. We now demonstrate that for the measurements that 
give rise to the maximal Bell-CHSH inequality violation  or maximal quantum advantage of $2$-to-$1$ random-access code by the maximally entangled state, the two-qubit states given by Eq. (\ref{can02q})
with $|c_1|\ge |c_2| \ge |c_3| $ give rise to non-zero $Q$. 
Let Alice performs measurements along the directions $\hat{u}_{0}=\frac{\hat{x}+\hat{y}}{\sqrt{2}}$
and $\hat{u}_{1}=\frac{\hat{x}-\hat{y}}{\sqrt{2}}$ and Bob performs the measurement along the directions $\vec{T}_{0}=\hat{x}$
and $\vec{T}_{1}=\hat{y}$.
For this choice of measurement directions which can be used to provide the maximal quantum advantage 
of $2$-to-$1$ random-access codes by the maximally entangled state \cite{PZ10},  the two-qubit states given by Eq. (\ref{can02q}) with $|c_1|\ge |c_2| \ge |c_3| $ give rise to the following expression of $Q$,
\begin{equation}
	Q = \frac{2|(c_1-a_1b_1)(c_2-a_2b_2)-a_1b_1a_2b_2|}{a_1^4 + (-2 + a_2^2)^2 - 2 a_1^2 (2 + a_2^2)}. \label{nzQ2}
\end{equation}
The right hand side of the above quantity is nonzero if and only if the state has a nonzero discord from Alice to Bob
as well as from Bob to Alice \cite{DVB10} provided that either Alice's or Bob's marginal of that two-qubit state is being maximally mixed, i.e., $\vec{a}=0$ or $\vec{b}=0$. Therefore, we arrive at the following.
\vspace*{12pt}
\noindent
\begin{proposition}
	$Q$ is nonzero for a given two-qubit state if and only if  it is superlocal and has one of the marginals maximally mixed. On the other hand, in general, a nonzero value of $Q$ provides sufficient certification of superlocality.
\end{proposition}
\vspace*{12pt}
\noindent
Thus, we identify a new classification of superlocal states according to whether it can be used for self-testing quantum information processing (QIP) or not (see fig. \ref{Fig:Hirarchylocal}).
\begin{figure}[h!]
	\begin{center}
		\includegraphics[width=7.5cm]{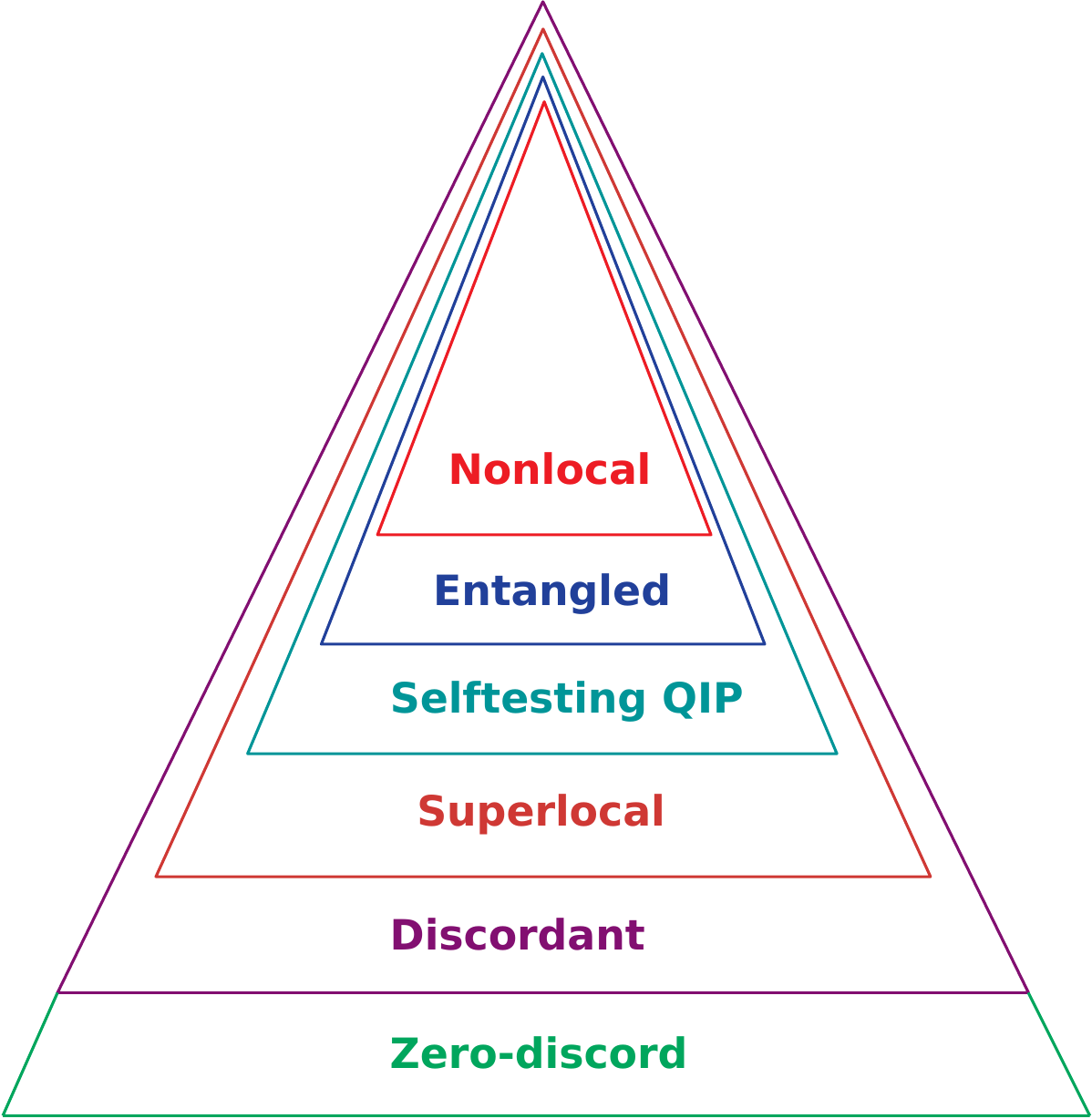}
	\end{center}
	\caption{ Hierarchy of correlations in two-qubit states.
		\label{Fig:Hirarchylocal}}
\end{figure}

Here it should be noted that for the set of quantum correlations that gives $Q$ value  as given by Eq. (\ref{nzQ2}), $Q$ is upper bounded  by $Q \le 1$. On the other hand,
for the Popescu-Rohrlich boxes  which are nonsignaling correlations, but are stronger than quantum correlations violating
a Bell-CHSH inequality to its algebraic maximum of $4$ \cite{BLM+05}, $Q$ takes the algebraic maximum of $2$.

\subsection{Superlocality as resource for $2$-to-$1$ and $3$-to-$1$ random-access codes}
In Ref. \cite{BP14}, Bobby and Paterek  studied quantum advantage of  $2$-to-$1$ and $3$-to-$1$ random-access codes assisted by 
two correlated qubits in the presence of two-bits of shared randomness. As a figure merit of the task,  
Bobby and Paterek considered  worst-case success probability of Bob's correct guess on Alice's $y$th bit defined as  
${{P}_{{\rm min} }}={{{\rm min} }_{x_0,x_1,y}} p (b=x_y|x_0x_1,y)$. 
Bobby and Paterek derived the following inequalities:
A classical $n$-to-$1$ random-access code assisted with two bits from a common source has 
\be
{{P}_{{\rm min} }}\leqslant \frac{1}{2} \quad \texttt{if}  \quad n > 2; \label{SL1}
\ee
\be
{{P}_{{\rm min} }}\leqslant \frac{2}{3} \quad \texttt{if}  \quad n = 2;  \label{SL2}
\ee
and 
\be
{{P}_{{\rm min} }}\le \frac{1}{2} \label{SL3}
\ee
for all $n > 1$ if the assisting bits have maximally mixed marginals for Bob. \label{Ineq}
By using these inequalities, Bobby and Paterek  have shown that even certain separable two-qubit state may become useful for implementing   $2$-to-$1$ and $3$-to-$1$ quantum random-access codes.

Note that for $n=2$ in Eqs. (\ref{SL1}), (\ref{SL2}) and (\ref{SL3}), the 
inequalities correspond to the Bell-Clauser-Horne-Shimony-Holt (Bell-CHSH) scenario \cite{CHS+69} while for $n=3$, the 
inequalities are related to the bipartite Bell scenario corresponding to Gisin’s elegant Bell inequality \cite{Gis09}
\footnote{In Ref. \cite{GP18}, in the context of  $n$-bit  parity-oblivious multiplexing task,
	the suitable Bell expressions corresponding to quantum success probability of this task has been identified. The Bell inequalities corresponding
	to these Bell expressions, in turn, are related to quantum advantage of $n$-to-$1$ random-access codes.}.
The violation of the inequalities given by Eqs. (\ref{SL1}) and (\ref{SL2}) by the local boxes 
implies that the simulation of these boxes by using shared classical randomness requires the hidden variable
of dimension $d_\lambda>2$. Therefore, these inequalities serve as sufficient certification of superlocality.

For the protocol given in Ref. \cite{BP14} which implements the $2$-to-$1$ quantum random-access code using the Bell-diagonal states
which are the two-qubit states given by Eq. (\ref{can02q}) with $\vec{a}=\vec{b}=0$, 
the separable state with $c_1=c_2=1/2$ and $c_3=0$
optimizes the quantum advantage within the separable Bell-diagonal states. In this case, the protocol prepares the following
conditional states on Bob's side,
\begin{align}
	\rho_{0|0}&=\rho_{00}=\frac{1}{2}\left(\I+\frac{\sigma_x+\sigma_y}{2\sqrt{2}}\right)   \nonumber  \\
	\rho_{0|1}&=\rho_{01}=\frac{1}{2}\left(\I+\frac{\sigma_x-\sigma_y}{2\sqrt{2}}\right)   \nonumber  \\
	\rho_{1|1}&=\rho_{10}=\frac{1}{2}\left(\I-\frac{\sigma_x-\sigma_y}{2\sqrt{2}}\right)   \nonumber  \\
	\rho_{1|0}&=\rho_{11}=\frac{1}{2}\left(\I-\frac{\sigma_x+\sigma_y}{2\sqrt{2}}\right)   
\end{align}
and Bob performs the following measurements:
\begin{align}
	M_0&=\sigma_x, \quad     M_1=\sigma_y 
\end{align}
It has been checked that the above preparations and measurements do not violate the dimension witness inequality given by Eq. (\ref{LDW}).
Thus, there exist quantum strategies in the P$\&$M scenario  
which have quantumness and cannot be used to provide quantum advantage for the random-access code.  However,  the spatial correlations 
realized using such  quantum strategies  in the scenario as in Fig. \ref{Fig:DI} violate  the inequality
given by Eq. (\ref{SL2}) with $n=2$ and give rise to nonzero $Q$. Therefore, these quantum strategies provides advantage for the random-access codes
in the presence of limited shared randomness, i.e., two bits of shared classical randomness. 
Thus, quantumness of certain sequential correlations gets manifested in the form of superlocal
correlations which represent the stronger than classical correlations in the presence of limited shared randomness.

\section{Discussions and Conclusions}
Developing protocols for self-testing quantum information processing is important for realizing quantum technologies 
for real-life applications. Self-testing quantum information protocols require certification of quantum resources 
either in a device-independent or in a semi-device-independent way. Device-independent quantum information protocols, 
which require genuine demonstration of Bell inequality violation, certifies quantumness, i.e., quantum entanglement
in the presence of adversaries who can implement local-hidden-variable models using an infinite amount of shared randomness or finite
shared randomness with sufficiently large dimension  \cite{BHQ+15}. On the other hand, semi-device-independent
quantum information protocols based on quantum steering certifies quantum entanglement in the presence of adversaries
who can implement local-hidden-state models in which only the untrusted parties have access to a classical random variable
of an infinite amount or of finite but sufficiently large dimension. 
In the context of semi-device-independent protocols where only the local Hilbert-space dimension of each party is trusted or fixed, 
quantum entanglement is certified in the presence of adversaries  who have access to restricted amount of finite shared randomness \cite{GBS16}. In such realistic quantum information protocols  in the presence of restricted amount of shared randomness, separable states can also simulate stronger-than classical correlations \cite{JAS17,JDS+18,DBD+17}.
Given the background that separable states are also useful for quantum information processing in certain circumstances, 
it is relevant to develop self-testing quantum information processing tasks based on quantumness of correlations beyond entanglement in the presence of limited shared randomness. 

With this motivation,
in this work, we have studied the equivalence between the quantumness of sequential correlations in the prepare 
and measure scenario with independent devices, which implements quantum random number generation, and the corresponding scenario which replaces quantum communication by sharing correlated particles.
In this context, we have demonstrated that quantumness of sequential correlations gets manifested as 
superunsteerable correlations which are stronger than classical correlations beyond entanglement in the presence 
of limited shared randomness. We have introduced an experimentally measurable quantity to bound
the genuine randomness generation in the scenario using shared correlated particles as resource.
We have shown that this quantity provides certification of superunsteerable correlations. Moreover, our certification of superunsteerability provides necessary and sufficient certification of any two-qubit state  which is neither 
a classical-quantum state nor a quantum-classical state and has one of the marginals maximally in a semi-device-independent way. Finally, we study the connection of our certification of superunsteerability with certification of superlocality.

In the case of 
prepare and measure scenario with correlated devices, the quantumness of sequential correlations
providing quantum advantage for random-access codes 
gets manifested as Bell nonlocal correlations in the corresponding scenario assisted by
shared correlated particles. We have demonstrated that when certain sequential correlations 
which have quantumness but do not violate the dimension witness inequality are realized as spatial correlations, 
they violate an inequality detecting superlocal correlations  which are stronger than the classical correlations
in the presence of limited shared randomness.

In Ref. \cite{DBD+17}, it has been demonstrated that the superunsteerable (superlocal) states in the above SDI scenario can be classified into (i) quantum-quantum states which demonstrate super-unsteerability (superlocality) with unsteerable (local) boxes having minimum hidden variable dimension $3$, and (ii) quantum-quantum states which demonstrate super-unsteerability (superlocality) with unsteerable (local) boxes having minimum hidden variable dimension $4$. Our example presented in proposition \ref{proposition3} indicates that certain superunsteerable states which have rank two and belong to the class (i) may not lead to nonzero $Q$. Hence, it would be interesting to check whether the quantity $Q$ proposed in this work  to certify superlocaliy and superunsteerability  provides necessary and sufficient certification 
of superlocal or superunsteerable states belonging to class (ii) with unsteerable or local boxes respectively having minimum
hidden variable dimension $4$.

It would be interesting to study whether this quantity
is upper bounded by $1$ for quantum correlations.
This will be useful  to discriminate quantum and post-quantum correlations \cite{PS15} since for the Popescu-Rohrlich boxes, this quantity
takes the algebraic maximum of $2$. 
It would be interesting to generalize the present work to quantum correlations with more number of outputs or inputs
as well as with more number of parties.
In Refs. \cite{LTB+14,PCS+15}, it has been demonstrated that genuine randomness can be certified in the presence of
local-hidden-state models. It would be interesting to study implications of the present work to these previous works. Finally, in the light of the present work, we plan to formulate a resource theory of 
superlocality and superunsteerability just like the resource theory of Bell nonlocality 
and quantum steering, respectively \cite{CG19, WSS+19,ZSHS23}.

\section*{Acknowledgement}
 C.J. acknowledges  the  financial support
from the Ministry of Science and Technology, Taiwan (Grant
No. MOST 108-2811-M-006-516). We would like to thank  Prof. Yeong-Cherng Liang and Prof. Paweł Horodecki for helpful discussions. DD acknowledges the Royal Society (United Kingdom) for the support through the Newton International Fellowship (NIF$\backslash$R$1\backslash212007$).

\bibliography{PM}

\end{document}